\newcommand{\mybibliography}{\begin{singlespace}\renewcommand{\baselinestretch}{1}\small\normalsize\bibliography{mybibdata}\end{singlespace}}
\numberwithin{equation}{section}
\def\@seccntformat#1{\@ifundefined{#1@cntformat}{\csname the#1\endcsname\hspace{.5em}}{\csname #1@cntformat\endcsname}}
\def\section@cntformat{\thesection.\hspace{.5em}}
\newcounter{mylistcounter}
\newcommand\lilhead{\@startsection{paragraph}{4}{0em}{.1in}{.07in}{\bfseries\itshape}}
\newtheorem{theorem}{Theorem}
\newtheorem{corollary}{Corollary}
\newtheorem{claim}{Claim}
\newtheorem*{rep@theorem}{\rep@title}
\newcommand{\newreptheorem}[2]{%
\newenvironment{rep#1}[1]{%
 \def\rep@title{#2 \ref{##1}}%
 \begin{rep@theorem}}%
 {\end{rep@theorem}}}
\newtheoremstyle{examplestyle}{}{}{}{}{\itshape}{}{.5em}{\thmname{#1}\thmnumber{ #2.}\thmnote{ #3.}}
\theoremstyle{examplestyle}
\DeclareFontFamily{OT1}{pzc}{}
\DeclareFontShape{OT1}{pzc}{m}{it}{<-> s * [1.200] pzcmi7t}{}
\DeclareMathAlphabet{\mathscr}{OT1}{pzc}{m}{it}
\renewcommand{\implies}{\Rightarrow}
\renewcommand{\ast}{{\mathlarger *}}%makes the Palatino asterisk the correct size in formulas
\title{Uniqueness of Inflection Points in Binomial\\ Exceedance Function Compositions}
\author{Srinivas Arigapudi\thanks{Department of Economic Sciences, IIT Kanpur, \protect\protect\href{mailto:arigapudi@iitk.ac.in}{arigapudi@iitk.ac.in}.}
\and Yuval Heller\thanks{Department of Economics, Bar-Ilan University, \protect\protect\href{mailto:yuval.heller@biu.ac.il}{yuval.heller@biu.ac.il}.} \and Amnon Schreiber\thanks{Department of Economics, Bar-Ilan University, \protect\protect\href{mailto:amnon.schreiber@biu.ac.il}{amnon.schreiber@biu.ac.il}.}%\thanks{Acknowledgent here.}
}
\date{\today}
\begin{document}
\begin{singlespace}
\maketitle
\begin{abstract}
We examine functions representing the cumulative probability of a binomial random variable exceeding a threshold, expressed in terms of the success probability per trial. These functions are known to exhibit a unique inflection point. We generalize this property to their compositions and highlight its applications.
\\

\noindent \textbf{Keywords:} Convex-concave functions, binomial distributions, cumulative distribution functions, unique fixed point. %\textbf{JEL codes: }C72, C73.
\end{abstract}
\end{singlespace}

\section{Introduction}
Let $k$ be a positive integer and $p\in [0,1].$ Let $X_k^p$ be a random variable following a binomial distribution with parameters $k$ and $p.$ Let $m\in \{1,2,\dots,k\}.$ Consider the function $F_{k,m}:[0,1]\rightarrow[0,1]$ representing the cumulative probability of the binomial random variable $X_k^p$ exceeding the threshold $m$:
\begin{equation}\label{eq:F_{k,m}_func}
    F_{k,m}(p) := Pr(X_k^p \geq m) \text{~~~~~~~~~~~~~~~~(Binomial exceedance function)}.
\end{equation}

Binomial exceedance functions naturally arise in various fields, such as hypothesis testing, risk assessments, epidemiology, and social learning. As shown in \cite{green1983fixed}, each binomial exceedance function has at most one interior
inflection point, which implies the existence of at most one interior fixed point. An interesting implication of this result is the following: if the probability that a biased coin yields at least $m$ heads in $k$ independent tosses equals the probability of yielding heads in a single toss, then this probability is uniquely determined.

Our main result (Theorem \ref{thm:two_times_composition}) establishes that the property of having a unique interior
inflection point extends to the composition of two binomial exceedance functions.  Specifically, $F_{k_1,m_1}(F_{k_2,m_2}(p))$ has at most one  inflection point where $k_1,k_2$ are positive integers and $1\leq m_i\leq k_i$ for $i=1,2.$ One implication of our result is  the following: if two biased coins have unknown probabilities $p_1,p_2\in(0,1)$, and for each coin $i\in\{1,2\}$, the probability that it yields at least $m_i$ heads in $k_i$ independent tosses equals the probability of yielding heads in a single toss of the other coin, then these probabilities $p_1,p_2$ are uniquely determined.

In a companion paper (\citealp{arigapudi2023heterogeneous}), 
we presented a variant of Theorem \ref{thm:two_times_composition} within a specific social learning setup, where agents sample the past behavior of a few others.\footnote{
Sampling dynamics were introduced in \cite{osborne1998games,sethi2000stability,sandholm2001almost}. Recent contributions to this literature include \cite{oyama2015sampling, mantilla2018efficiency,sandholm2019best,sandholm2020stability,arigapudi2020instability,Raj,izquierdo2022stability}.}
Here, we present our mathematical result as a stand-alone contribution, as we believe it could be valuable in contexts beyond the social learning application. This paper includes a new result---the second part of Corollary \ref{thm:cor_fixed_point}, which characterizes conditions for the existence of exactly one interior fixed point--- and offers a discussion and a new conjecture regarding the composition of multiple binomial exceedance functions.

\section{Preliminary Results}\label{sec:Result}
For completeness, we first present the result that a binomial exceedance function admits a unique inflection point (an adaptation of the results presented in \citealp{green1983fixed}).

\begin{claim}\label{thm:known_result}
    Let $k>0$ be an integer and $m$ be an integer such that $0\leq m \leq k.$ The function $F_{k,m}(p)$ has at most one interior inflection point i.e., there exists at most a single $p^{\ast} \in (0,1)$ at which $F_{k,m}''(p^{\ast}) = 0.$ Further, if $2\leq m \leq k-1$, then $F_{k,m}(p)$ has exactly one interior inflection point.
\end{claim}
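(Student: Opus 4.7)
The plan is to reduce the problem to analyzing an explicit, fully factored formula for $F_{k,m}''(p)$, from which the location (and uniqueness) of the interior inflection point can be read off by inspection.

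First, I would establish the standard telescoping identity for the derivative. Writing $F_{k,m}(p)=\sum_{j=m}^{k}\binom{k}{j}p^{j}(1-p)^{k-j}$ and differentiating term by term, the cross-terms cancel via the relation $j\binom{k}{j}=k\binom{k-1}{j-1}$ and $(k-j)\binom{k}{j}=k\binom{k-1}{j}$, so that
\[
F_{k,m}'(p)\;=\;k\binom{k-1}{m-1}\,p^{\,m-1}(1-p)^{\,k-m}.
\]
This is the beta-density representation of the derivative and is the only nontrivial input.

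Next, I would differentiate this closed form once more, pulling out the common factor $p^{\,m-2}(1-p)^{\,k-m-1}$ (assuming $2\le m\le k-1$; the other cases are treated separately below). A direct computation gives
\[
F_{k,m}''(p)\;=\;k\binom{k-1}{m-1}\,p^{\,m-2}(1-p)^{\,k-m-1}\Bigl[(m-1)-(k-1)p\Bigr].
\]
On the open interval $(0,1)$ the prefactor $p^{m-2}(1-p)^{k-m-1}$ is strictly positive, so the sign of $F_{k,m}''(p)$ coincides with the sign of the linear factor $(m-1)-(k-1)p$. This factor vanishes at the unique point $p^{\ast}=(m-1)/(k-1)$, which lies in $(0,1)$ precisely when $2\le m\le k-1$, and changes sign there; hence $p^{\ast}$ is the unique interior inflection point in that range.

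Finally, I would dispose of the boundary exponents. If $m=1$, then $F_{k,1}(p)=1-(1-p)^{k}$, so $F_{k,1}''(p)=-k(k-1)(1-p)^{k-2}\le0$ on $(0,1)$, with no sign change; if $m=k$, then $F_{k,k}(p)=p^{k}$, so $F_{k,k}''(p)=k(k-1)p^{k-2}\ge0$, again with no sign change; and the $m=0$ case gives the constant function $F_{k,0}\equiv1$, which has no inflection point. In all these cases the claim ``at most one interior inflection point'' holds vacuously. I do not expect a serious obstacle here: the only nonobvious ingredient is the telescoped formula for $F_{k,m}'(p)$, after which the argument is a one-line factorization. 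The main care needed is keeping track of which exponents become negative when $m\in\{1,k\}$, which is why those cases are handled separately rather than plugged into the general formula.
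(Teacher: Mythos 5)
Your proposal is correct and follows essentially the same route as the paper: telescope the term-by-term derivative to obtain $F_{k,m}'(p)=k\binom{k-1}{m-1}p^{m-1}(1-p)^{k-m}$, differentiate once more, and read off the unique zero of the linear factor $(m-1)-(k-1)p$. Your separate treatment of the boundary cases $m\in\{0,1,k\}$ is slightly more careful than the paper's, which applies the general factored formula without comment on the negative exponents, but the substance is identical.
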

\begin{proof}
    We have,
    \begin{align*}
        F_{k,m}(p) &= Pr(X_k^p \geq m)= \sum_{j= m}^k \binom{k}{j}p^j(1-p)^{k-j}\\
        \implies F_{k,m}'(p) &= \sum_{j\geq m}\left( j\binom{k}{j}p^{j-1}(1-p)^{k-j}- (k-j)\binom{k}{j}p^j(1-p)^{k-j-1}\right)\\
        &= k\sum_{j= m}^k\left( \binom{k-1}{j-1}p^{j-1}(1-p)^{k-j} - \binom{k-1}{j}p^j(1-p)^{k-j-1}\right)\\
        &= k\sum_{j=m}^k \left(Pr(X_{k-1}^p =j-1) - Pr(X_{k-1}^p =j)\right)\\
        &= k Pr(X_{k-1}^p=m-1)= k\binom{k-1}{m-1}p^{m-1}(1-p)^{k-m}\\
        \implies F_{k,m}''(p) &= k\binom{k-1}{m-1}p^{m-2}(1-p)^{k-m-1}((m-1)(1-p) - (k-m)p)\\
        &= k\binom{k-1}{m-1}p^{m-2}(1-p)^{k-m-1}(m-1 - (k-1)p)
    \end{align*}
    From the above computations, it follows that $F_{k,m}''(p) \geq 0$ if and only if $p\leq \frac{m-1}{k-1}$ and that $F_{k,m}''(p) = 0$ if and only if $p= \frac{m-1}{k-1}.$ The proof is complete.
\end{proof}

From the above computations, we have
\begin{align}
    F_{k,m}'(p) &= k\binom{k-1}{m-1}p^{m-1}(1-p)^{k-m} \label{eq:eqn1}\\
    F_{k,m}''(p) &= k\binom{k-1}{m-1}p^{m-2}(1-p)^{k-m-1}(m-1 - (k-1)p)\label{eq:eqn2}
\end{align}
Eqs. \eqref{eq:eqn1} and \eqref{eq:eqn2} will be later used in the proof of our main result, Theorem \ref{thm:two_times_composition}.

The following claim shows that the number of interior fixed points is determined by the number of inflection points. Formally,

\begin{claim}\label{cla:atmost_one_fixed_point}
    Let $G:[0,1]\rightarrow [0,1]$ be a twice continuously differentiable function with $G(0) = 0$ and $G(1) = 1.$ If $G$ has at most one interior inflection point, then $G$ has at most one interior fixed point.
\end{claim}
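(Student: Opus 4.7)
The plan is to argue by contradiction using Rolle's theorem applied twice to an appropriate auxiliary function.

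Suppose, toward contradiction, that $G$ has (at least) two distinct interior fixed points $p_1<p_2$ in $(0,1)$. Define the auxiliary function $H:[0,1]\to\R$ by $H(p):=G(p)-p$. Because $G(0)=0$, $G(1)=1$, $G(p_1)=p_1$, and $G(p_2)=p_2$, the function $H$ has (at least) four zeros:
\begin{equation*}
H(0)=H(p_1)=H(p_2)=H(1)=0,
\end{equation*}
and these lie in the order $0<p_1<p_2<1$. Also, differentiating twice gives $H''(p)=G''(p)$, since the linear term $-p$ contributes nothing to the second derivative.

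The next step is two applications of Rolle's theorem. By assumption $G$ is twice continuously differentiable, so $H$ is as well. Applying Rolle's theorem to $H$ on the three consecutive subintervals $(0,p_1)$, $(p_1,p_2)$, $(p_2,1)$ yields three distinct points $q_1<q_2<q_3$ in $(0,1)$ at which $H'$ vanishes. Applying Rolle's theorem again to $H'$ on the subintervals $(q_1,q_2)$ and $(q_2,q_3)$ yields two distinct points $r_1<r_2$ in $(0,1)$ at which $H''$ vanishes.

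Since $H''=G''$, we conclude that $G''(r_1)=G''(r_2)=0$ with $r_1\neq r_2$, i.e., $G$ has (at least) two distinct interior inflection points, contradicting the hypothesis. Hence $G$ can have at most one interior fixed point.

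I do not foresee any genuine obstacle here. The only mild subtlety is that one should confirm the Rolle step still works when a fixed point happens to be tangential (i.e., $G'(p_i)=1$, so that $H'(p_i)=0$ as well), but this only produces extra zeros of $H'$ and the conclusion is unchanged. The argument relies only on the twice-continuous-differentiability of $G$ and on the definition of inflection point used in Claim \ref{thm:known_result}, namely a point where $G''$ vanishes.
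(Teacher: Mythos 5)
Your proof is correct and follows essentially the same route as the paper: define $H(p)=G(p)-p$, use the four zeros $0<p_1<p_2<1$, and apply Rolle's theorem twice to produce two distinct interior zeros of $H''=G''$, contradicting the uniqueness of the inflection point. No substantive differences.
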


\begin{proof}
    Suppose, for contradiction,  that there are two interior fixed points $p^{\ast}, q^{\ast}\in (0,1).$ Without loss of generality, assume that $p^{\ast}<q^{\ast}.$ 
    Let $H:[0,1]\rightarrow [0,1]$ be defined as follows:
    $H(p) = G(p)-p$. The fact that $G$ is twice continuously differentiable implies that $H$ is also twice continuously differentiable.  We have,
    \[
    H(0) = H(p^{\ast}) = H(q^{\ast})= H(1) =0
    \]
    By Rolle's theorem, it follows that there exists $r^{\ast}\in (0, p^{\ast}),$ $s^{\ast}\in (p^{\ast}, q^{\ast}),$ and $t^{\ast}\in (q^{\ast}, 1)$ such that $H'(r^{\ast}) = H'(s^{\ast}) = H'(t^{\ast})=0.$ Another application of Rolle's theorem shows that there exists $u^{\ast}\in (r^{\ast}, s^{\ast})$ and $v^{\ast}\in (s^{\ast}, t^{\ast})$ such that $H''(u^{\ast}) = H''(t^{\ast}) =0.$
    Also,
    \begin{align*}
        H(p) = G(p) - p &\implies H'(p) = G'(p)-1    \implies H''(p) =G''(p).
    \end{align*}
    It therefore follows that $G''(u^{\ast}) = G''(t^{\ast}) =0.$ But this contradicts the fact that $G$ has at most one interior inflection point. Therefore, $G$ can have at most one interior fixed point.
\end{proof}
Our final claim shows that under mild conditions (specifically, a zero derivative at 0 and 1), exceedance functions possess at least one interior fixed point.
\begin{claim}\label{cla:atleast_one_fixed_point}
    Let $G:[0,1]\rightarrow [0,1]$ be a 
    continuously differentiable function such that $G(0) = 0, G(1)=1,$ and $G'(0) = G'(1) =0.$ Then, $G$ has at least one interior fixed point.
\end{claim}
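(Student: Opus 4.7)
The plan is to form the auxiliary function $H(p)=G(p)-p$ and exhibit an interior zero via the intermediate value theorem, leveraging the boundary derivative conditions $G'(0)=G'(1)=0$ to pin down the sign of $H$ on both sides of the interval.

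First I would note that $H$ is continuous on $[0,1]$, with $H(0)=H(1)=0$ and $H'(0)=G'(0)-1=-1$ and $H'(1)=G'(1)-1=-1$. From $H'(0)=-1<0$, the definition of the derivative yields $H(p)=H(0)+H'(0)p+o(p)=-p+o(p)$ as $p\downarrow 0$, so $H(p)<0$ for all sufficiently small $p>0$. Symmetrically, from $H'(1)=-1<0$, the expansion $H(p)=H(1)+H'(1)(p-1)+o(p-1)=-(p-1)+o(p-1)$ as $p\uparrow 1$ gives $H(p)>0$ for all $p$ sufficiently close to $1$ from below.

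Having produced points $p_{0}\in(0,1)$ with $H(p_{0})<0$ and $p_{1}\in(p_{0},1)$ with $H(p_{1})>0$, I would invoke the intermediate value theorem on the interval $[p_{0},p_{1}]\subset(0,1)$ to obtain $p^{\ast}\in(p_{0},p_{1})$ with $H(p^{\ast})=0$, i.e., $G(p^{\ast})=p^{\ast}$. This $p^{\ast}$ lies in the interior $(0,1)$, giving the required interior fixed point.

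No step here should present a serious obstacle; the only subtlety is that $H$ itself vanishes at both endpoints, so a naive application of IVT to $H$ on $[0,1]$ would fail to produce an interior zero. The boundary derivative hypothesis $G'(0)=G'(1)=0$ is exactly what is needed to push $H$ strictly off zero in the interior near each endpoint, and this is the only place in the argument where those hypotheses are used.
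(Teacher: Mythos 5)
Your proof is correct. It uses the same auxiliary function $H(p)=G(p)-p$ and the same essential fact---that $G'(0)=G'(1)=0$ forces $H'(0)=H'(1)=-1$---but arranges the argument differently: you argue directly, using the definition of the derivative to show $H<0$ just to the right of $0$ and $H>0$ just to the left of $1$, and then apply the intermediate value theorem on a compactly contained subinterval. The paper instead argues by contradiction: it assumes $H$ has constant sign on $(0,1)$ and shows that, say, $H<0$ throughout would force the one-sided difference quotients of $H$ at $1$ to be positive, contradicting $H'(1)=-1$. The two are close in spirit, but your direct version is slightly cleaner in one respect: the paper passes a strict inequality through a limit (concluding $H'(1)>0$ rather than the justified $H'(1)\geq 0$), a harmless slip since $H'(1)\geq 0$ already contradicts $H'(1)=-1$, whereas your local expansion $H(p)=-p+o(p)$ near $0$ and $H(p)=(1-p)+o(1-p)$ near $1$ avoids that issue entirely. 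Your closing remark correctly identifies why a naive IVT application fails and where the boundary derivative hypotheses enter.
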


\begin{proof}
    Let $H:[0,1] \rightarrow [0,1]$ be defined as $H(p) = G(p)-p.$ Suppose, for contradiction, that $G$ has no interior fixed point. This implies that for all $p\in (0,1),$ either $H(p)<0$ or $H(p) >0.$ 

    Suppose $H(p)<0$ i.e., $G(p)<p$ for all $p\in (0,1).$ $G$ being continuously differentiable implies that $H$ is continuously differentiable. We have,
    \begin{align*}
        H(p) = G(p)-p &\implies H'(p) = G'(p)-1 \implies H'(1) =G'(1) -1.
    \end{align*}
    We now compute as follows:
    \begin{align*}
        H'(1) &= G'(1) - 1= \lim_{h\rightarrow 0^{+}}\left(\frac{G(1)-G(1-h)}{h}\right) - 1\\
        &> \lim_{h\rightarrow 0^{+}}\left(\frac{1-(1-h))}{h}\right) - 1 \ \  \ \ (\text{since} \ G(1) =1 \ \text{and} \ G(1-h) <1-h)\\
        &= 0
    \end{align*}
    From the above, it follows that $H'(1) >0.$ Also, $G'(1) =0$ implies that $H'(1) = -1.$ Clearly, this is a contradiction. We thus cannot have $H(p) <0$ for all $p\in (0, 1).$  Similarly, we can show that $H(p)>0$ for all $p\in (0,1)$ is not possible. It therefore follows that $G$ has at least one interior fixed point. 
\end{proof}

Claims \ref{thm:known_result}--\ref{cla:atleast_one_fixed_point} jointly imply that $F_{k,m}$ admits at most one fixed point (and exactly one fixed point if $2\leq m\leq k-1$). This is illustrated in Figure \ref{fig:single_func}.
\begin{figure}[h]
\begin{centering}
\caption{The function $F_{6,m}(p)=PR(X_6^p\geq m)$ for various values of $m.$ The dots denote the inflection points.}\label{fig:single_func}
\includegraphics[scale=0.5]{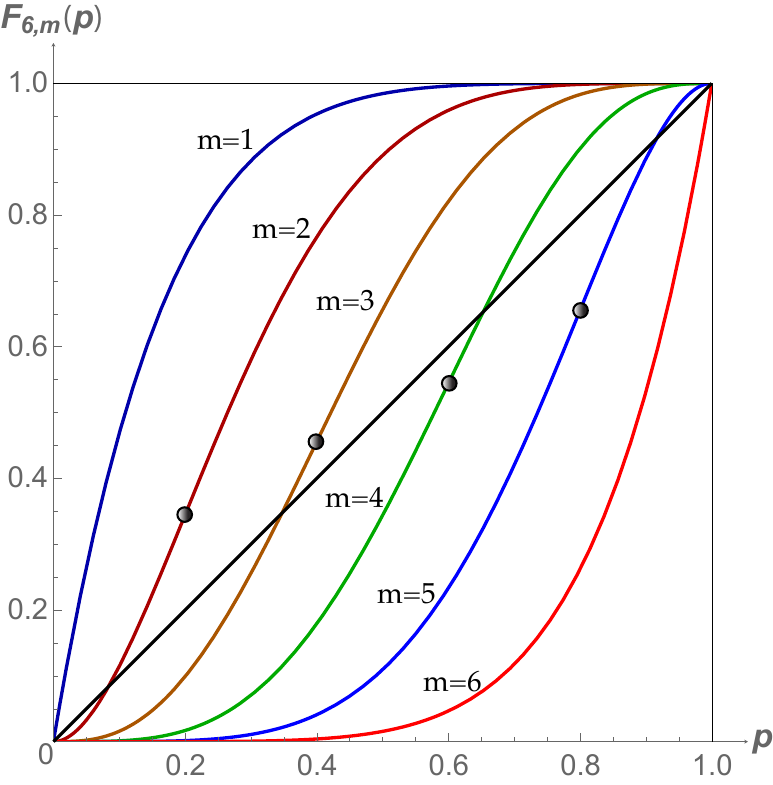}
\par\end{centering}
%{\small{}In the above plots, the $x$-axis denotes $p$ and the $y$-axis denotes the probability $F_{k,m}(p)$.}{\small\par}
\end{figure}

\section{Main Result}
Our main result shows that the property of having a unique inflection point extends to the composition of two binomial exceedance functions. 
\begin{theorem}\label{thm:two_times_composition}
    Let $k_1,k_2 > 0$ be integers. Let $m_1$ and $m_2$ be integers such that $1\leq m_1\leq k_1$ and $1\leq m_2\leq k_2.$ Let the function $F:[0,1]\rightarrow [0,1]$ be defined as follows:
    \begin{equation}\label{eq:composition_function}
        F(p) = F_{k_1,m_1}(F_{k_2,m_2}(p)).
    \end{equation}
    Then, $F(p)$ has at most one interior inflection point i.e., there exists at most a single $p^{\ast} \in (0,1)$ at which $F''(p^{\ast}) = 0.$     
\end{theorem}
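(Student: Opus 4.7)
The strategy is to reduce to showing an explicit auxiliary function $H$ has at most one zero on $(0,1)$, and then to establish this via a strict-monotonicity argument on a log-ratio restricted to the one subinterval where $H$ can vanish. Writing $g := F_{k_2,m_2}$ and applying the chain rule with \eqref{eq:eqn1}--\eqref{eq:eqn2} gives
\[
F''(p) = F_{k_1,m_1}''(g(p))\,g'(p)^2 + F_{k_1,m_1}'(g(p))\,g''(p).
\]
I would factor out the maximal prefactor that is positive on $(0,1)$ to obtain $F''(p) = (\text{positive})\cdot H(p)$, where $H(p) := A(p) + B(p)$ with
\begin{align*}
A(p) &:= k_2\binom{k_2-1}{m_2-1}\,p^{m_2}(1-p)^{k_2-m_2+1}\bigl[(m_1-1)-(k_1-1)g(p)\bigr],\\
B(p) &:= g(p)\bigl(1-g(p)\bigr)\bigl[(m_2-1)-(k_2-1)p\bigr].
\end{align*}
Hence it suffices to bound the zeros of $H$ on $(0,1)$.

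In the generic case $2\le m_i \le k_i - 1$, set $q^* := (m_1-1)/(k_1-1)$, $p^* := (m_2-1)/(k_2-1)$, and $\tilde p := g^{-1}(q^*)$. Then $A$ changes sign on $(0,1)$ exactly at $\tilde p$, while $B$ changes sign exactly at $p^*$. If $\tilde p = p^*$, the summands share a sign on each side of $p^*$, so $H$ has the unique interior zero $p^*$. Otherwise, outside the open interval $I$ strictly between $\tilde p$ and $p^*$, $A$ and $B$ share a sign and cannot cancel, so every zero of $H$ lies in $I$. Assume WLOG $\tilde p < p^*$, so $A < 0 < B$ on $I$ and $H(p)=0$ iff $-A(p)/B(p) = 1$; the plan is to show $\ln(-A/B)$ is strictly increasing on $I$, which forces at most one zero.

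Direct differentiation yields
\[
\frac{d}{dp}\ln\!\frac{-A(p)}{B(p)} = \underbrace{\left[\frac{m_2}{p} - \frac{k_2-m_2+1}{1-p} + \frac{k_2-1}{(m_2-1)-(k_2-1)p}\right]}_{(\star)} + g'(p) \underbrace{\left[\frac{k_1-1}{(k_1-1)g-(m_1-1)} - \frac{1}{g} + \frac{1}{1-g}\right]}_{(\star\star)}.
\]
Putting each bracket over a common denominator yields a fraction with quadratic numerator
\[
Q_2(p) := k_2(k_2-1)p^2 - 2k_2(m_2-1)p + m_2(m_2-1), \qquad Q_1(g) := (k_1-1)g^2 - 2(m_1-1)g + (m_1-1),
\]
respectively. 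Their discriminants equal $-4k_2(m_2-1)(k_2-m_2)$ and $4(m_1-1)(m_1-k_1)$, both non-positive for $1 \le m_i \le k_i$, so $Q_1, Q_2 \ge 0$. On $I$ the denominator $p(1-p)[(m_2-1)-(k_2-1)p]$ of $(\star)$ is positive (since $p < p^*$), and the denominator $g(1-g)[(k_1-1)g-(m_1-1)]$ of $(\star\star)$ is also positive (since $g > q^*$), so $(\star), (\star\star) > 0$; combined with $g' > 0$, this yields $(d/dp)\ln(-A/B) > 0$ on $I$. The symmetric case $\tilde p > p^*$ reverses the sign of each bracket-denominator and gives strict decrease of the analogous log-ratio $\ln(A/(-B))$. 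Boundary cases $m_i \in \{1, k_i\}$ collapse $\tilde p$ or $p^*$ to an endpoint and only require a one-sided variant of the same computation.

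The main obstacle I anticipate is the algebraic step from $(\star), (\star\star)$ to the quadratics $Q_1, Q_2$, together with verifying that their discriminants factor as $(m_i-1)(m_i-k_i)$ or $(m_i-1)(k_i-m_i)$ up to positive constants; those factorizations are what force the discriminants non-positive in the admissible parameter range and hence $Q_i$ non-negative. Once this non-negativity is in hand, the remaining ingredients---factoring $F''$, localizing zeros to $I$ by sign analysis, and deducing uniqueness from strict monotonicity of the log-ratio---are essentially routine.
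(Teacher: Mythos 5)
Your proof is correct, but it takes a genuinely different route from the paper's. After the shared first step (chain rule plus Eqs.~\eqref{eq:eqn1}--\eqref{eq:eqn2}), the paper divides $F''$ by $w_1'(w_2(p))\,w_2'(p)$ and rewrites $F''(p)=0$ as an equation between a strictly decreasing and a strictly increasing function of $p$; the decreasing side comes from expressing $w_2'/w_2$ and $w_2'/(1-w_2)$ as reciprocals of sums of powers of $p/(1-p)$, i.e.\ it exploits the binomial tail-sum structure of both $F_{k_2,m_2}$ and $1-F_{k_2,m_2}$. You instead clear denominators to get $F''=(\text{positive})\cdot(A+B)$, localize all zeros to the interval $I$ between the sign changes of $A$ and $B$, and prove uniqueness there via strict monotonicity of $\ln(-A/B)$, certified by nonnegativity of the quadratics $Q_1,Q_2$. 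I verified the factorization $H=A+B$, both quadratic numerators, both discriminant factorizations ($4(m_1-1)(m_1-k_1)$ and $4k_2(m_2-1)(m_2-k_2)$), and the denominator signs on $I$; all check out, and strictness holds because $Q_2>0$ on $(0,1)$ whenever $k_2\ge 2$ (the case $k_2=1$ forces $w_2=\mathrm{id}$ and reduces to Claim~\ref{thm:known_result}, and symmetrically for $k_1=1$). The trade-off: the paper's argument is shorter and global, with no case split on the ordering of $\tilde p$ and $p^*$ or on boundary values of $m_i$, but it leans on the tail-sum representation; yours uses only the closed forms of $F_{k,m}'$ and $F_{k,m}''$, yields as a by-product that $F''$ changes sign from positive to negative (so the interior inflection is a convex-to-concave transition), and the discriminant computation is a clean, checkable certificate that could plausibly extend to the $n$-fold composition conjectured in the paper. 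In a final write-up, do spell out the boundary cases $m_i\in\{1,k_i\}$ and note explicitly that $H(\tilde p)=B(\tilde p)\ne 0$ and $H(p^*)=A(p^*)\ne 0$ when $\tilde p\ne p^*$, so the endpoints of $I$ contribute no zeros.
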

\begin{proof}
    For $i\in \{1,2\},$ let $w_i:[0,1] \rightarrow [0,1]$ be defined as follows: $w_i(p) = F_{k_i, m_i}(p)$.
    
    For $i=1,2$ and $p\in (0,1),$ we have (from Eqs. \eqref{eq:eqn1} and \eqref{eq:eqn2})
    \begin{equation}\label{eq:w''/w'}
        \frac{w_i''(p)}{w_i'(p)} = \frac{k_i\binom{k_i-1}{m_i-1}p^{m_i-2}(1-p)^{k_i-m_i-1}(m_i-1 - (k_i-1)p)}{k_i\binom{k_i-1}{m_i-1}p^{m_i-1}(1-p)^{k_i-m_i}} = \frac{m_{i}-1}{p}-\frac{k_{i}-m_{i}}{1-p}.
    \end{equation}
    By definition, $F(p) = w_1(w_2(p)).$ For $p\in(0,1),$ using Eq. \eqref{eq:w''/w'}, we compute as follows:
    \begin{align*}
F'\left(p\right) &= w_{1}'\left(w_{2}\left(p\right)\right)w_{2}'\left(p\right)\\
F''\left(p\right)&=w_{1}''\left(w_{2}\left(p\right)\right)\left(w_{2}'\left(p\right)\right)^{2}+w_{1}'\left(w_{2}\left(p\right)\right)w_{2}''\left(p\right)\\
&= w_{1}'\left(w_{2}\left(p\right)\right)w_{2}'\left(p\right)\left[\left(\frac{m_{1}-1}{w_{2}\left(p\right)}-\frac{k_{1}-m_{1}}{1-w_{2}\left(p\right)}\right)w_{2}'\left(p\right)+\frac{m_{2}-1}{p}-\frac{k_{2}-m_{2}}{1-p}\right].
\end{align*}

The fact that each $w_{i}\left(p\right)$ is strictly increasing implies
that $F''\left(p\right)=0$ if and only if
\[
\left(\frac{m_{1}-1}{w_{2}\left(p\right)}-\frac{k_{1}-m_{1}}{1-w_{2}\left(p\right)}\right)w_{2}'\left(p\right)=\frac{k_{2}-m_{2}}{1-p}-\frac{m_{2}-1}{p}\Leftrightarrow
\]
\[
k_{2}\left(\begin{array}{c}
k_{2}-1\\
m_{2}-1
\end{array}\right)\left(\frac{m_{1}-1}{w_{2}\left(p\right)}-\frac{k_{1}-m_{1}}{1-w_{2}\left(p\right)}\right)p^{m_{2}-1}\left(1-p\right)^{k_{2}-m_{2}}=\frac{k_{2}-m_{2}}{1-p}-\frac{m_{2}-1}{p}\Leftrightarrow
\]
{\footnotesize{}
\[
\!\!\!\!\!\!\,\,\!\!k_{2}\left(\begin{array}{c}
k_{2}-1\\
m_{2}-1
\end{array}\right)\left(\frac{\left(m_{1}-1\right)p^{m_{2}}\left(1-p\right)^{k_{2}-m_{2}}}{\sum_{l=m_{2}}^{k_{2}}\left(\begin{array}{c}
k_{2}\\
l
\end{array}\right)p^{l}\left(1-p\right)^{k_{2}-l}}-\frac{\left(k_{1}-m_{1}\right)p^{m_{2}}\left(1-p\right)^{k_{2}-m_{2}}}{\sum_{l=0}^{m_{2}-1}\left(\begin{array}{c}
k_{2}\\
l
\end{array}\right)p^{l}\left(1-p\right)^{k_{2}-l}}\right)\frac{1}{p}=\frac{k_{2}-m_{2}}{1-p}-\frac{m_{2}-1}{p}\Leftrightarrow
\]
}
\vspace{-15px}
\begin{equation}\label{eq:bigeqn}
k_{2}\left(\begin{array}{c}
k_{2}-1\\
m_{2}-1
\end{array}\right)\left(\frac{m_{1}-1}{\sum_{l=m_{2}}^{k_{2}}\left(\begin{array}{c}
k_{2}\\
l
\end{array}\right)\left(\frac{p}{1-p}\right)^{l-m_{2}}}-\frac{k_{1}-m_{1}}{\sum_{l=0}^{m_{2}-1}\left(\begin{array}{c}
k_{2}\\
l
\end{array}\right)\left(\frac{1-p}{p}\right)^{m_{2}-l}}\right)\frac{1}{p}=\frac{k_2-m_2}{1-p}-\frac{m_2-1}{p}.
\end{equation}
Using the fact that the functions $\frac{p}{1-p}, \frac{1}{1-p}$ are strictly increasing in $p$ and that the functions $\frac{1-p}{p}, \frac{1}{p}$ are strictly decreasing in $p,$ one can easily verify that the left-hand side of Eq. \eqref{eq:bigeqn} is strictly decreasing and that the right-hand side is strictly increasing in $p.$ Therefore, there can be at most one point $p^{\ast}\in (0,1)$ at which $F''(p^{\ast})=0$ i.e., $F$ has at most one interior inflection point.
\end{proof}

From Theorem \ref{thm:two_times_composition}, Claims \ref{cla:atmost_one_fixed_point} and \ref{cla:atleast_one_fixed_point}, it follows that the composition of two binomial exceedance functions admits at most a single interior fixed point. This is illustrated in  Figure \ref{fig:composition_func}.
\begin{figure}[h]
\begin{centering}
\caption{The function $F_{3,m_1}(F_{4,m_2}(p))$ for various values of $m_1$ and $m_2.$ The dots denote the inflection points. }\label{fig:composition_func}
\includegraphics[scale=0.5]{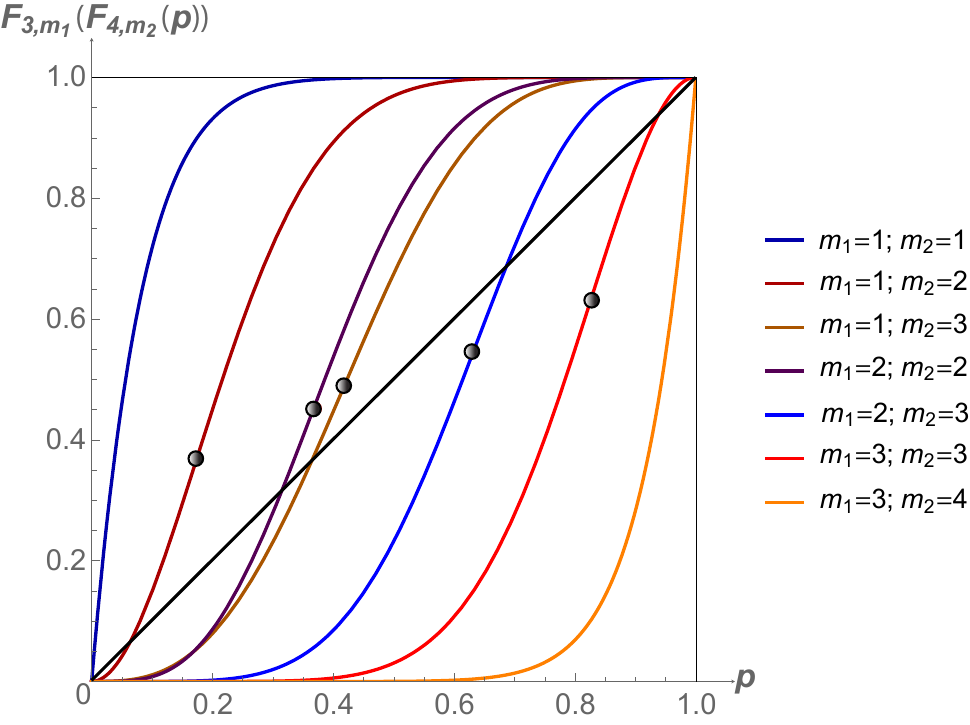}
\par\end{centering}
%{\small{}In the above plots, the $x$-axis denotes $p$ and the $y$-axis denotes the probability $F_{k,m}(p)$.}{\small\par}
\end{figure}
\begin{corollary}\label{thm:cor_fixed_point}
    Let $k_1,k_2 > 0$ be integers. Let $m_1$ and $m_2$ be integers such that $0<m_1\leq k_1$ and $0<m_2\leq k_2.$ 
    The function $F(p)$ defined in Eq. \eqref{eq:composition_function} has at most one interior fixed point i.e., at most a single $p_{fix} \in (0,1)$ at which $F(p_{fix}) = p_{fix}.$          
        Further, if either $2\leq m_1 \leq k_1-1$ or $2\leq m_2 \leq k_2-1$ then $F(p)$ has exactly one interior fixed point.
\end{corollary}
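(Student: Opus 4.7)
The corollary has two parts: at most one interior fixed point, and (under the stated side condition) at least one. The first is essentially a plug-and-chug combination of results already established in the paper, while the second requires a short boundary-derivative computation to verify the hypothesis of Claim \ref{cla:atleast_one_fixed_point}.

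\textbf{First part (at most one fixed point).} Since $F_{k_i,m_i}(0)=0$ and $F_{k_i,m_i}(1)=1$ for $1\le m_i\le k_i$, the composition $F$ satisfies $F(0)=0$ and $F(1)=1$. Being a composition of polynomials, $F$ is twice continuously differentiable on $[0,1]$. Theorem \ref{thm:two_times_composition} tells us $F$ has at most one interior inflection point, so Claim \ref{cla:atmost_one_fixed_point} applies directly, yielding at most one interior fixed point.

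\textbf{Second part (existence under the side condition).} The goal is to invoke Claim \ref{cla:atleast_one_fixed_point}, which requires $F'(0)=F'(1)=0$. By the chain rule,
\[
F'(p)=F_{k_1,m_1}'\bigl(F_{k_2,m_2}(p)\bigr)\,F_{k_2,m_2}'(p),
\]
and from Eq.~\eqref{eq:eqn1} we have $F_{k,m}'(p)=k\binom{k-1}{m-1}p^{m-1}(1-p)^{k-m}$, so
\[
F_{k,m}'(0)=0\ \Longleftrightarrow\ m\ge 2,\qquad F_{k,m}'(1)=0\ \Longleftrightarrow\ m\le k-1.
\]
Using $F_{k_2,m_2}(0)=0$ and $F_{k_2,m_2}(1)=1$, the chain-rule formula evaluates to $F'(0)=F_{k_1,m_1}'(0)\,F_{k_2,m_2}'(0)$ and $F'(1)=F_{k_1,m_1}'(1)\,F_{k_2,m_2}'(1)$. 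If $2\le m_1\le k_1-1$, both factors of $F'(0)$ and $F'(1)$ involving index $1$ vanish, so $F'(0)=F'(1)=0$; symmetrically if $2\le m_2\le k_2-1$. Either way, Claim \ref{cla:atleast_one_fixed_point} delivers at least one interior fixed point, and combined with the first part we conclude there is exactly one.

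\textbf{Where the work is.} There is no genuine obstacle here: Theorem \ref{thm:two_times_composition} supplies the hard ``at most one inflection'' fact, and the rest is a short application of Claims \ref{cla:atmost_one_fixed_point}--\ref{cla:atleast_one_fixed_point}. The only point worth stating carefully is that the side condition on $m_1,m_2$ is precisely what guarantees both boundary derivatives of the composition vanish; one only needs one of the two indices $i\in\{1,2\}$ to satisfy $2\le m_i\le k_i-1$ because this kills both $F'(0)$ and $F'(1)$ through the corresponding factor in the chain-rule expression.
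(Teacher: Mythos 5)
Your proof is correct and follows essentially the same route as the paper's: apply Theorem \ref{thm:two_times_composition} with Claim \ref{cla:atmost_one_fixed_point} for uniqueness, then verify $F'(0)=F'(1)=0$ via the chain rule and Eq.~\eqref{eq:eqn1} so that Claim \ref{cla:atleast_one_fixed_point} gives existence. The only (welcome) addition is that you make explicit the equivalences $F_{k,m}'(0)=0\Leftrightarrow m\ge 2$ and $F_{k,m}'(1)=0\Leftrightarrow m\le k-1$, which the paper states without spelling out.
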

\begin{proof}
    Recall, $F(p) = w_1(w_2(p)),$ where $w_i(p) = Pr(X_{k_i}^p \geq m_i)$ for $i=1,2.$ Clearly, $w_i(0) =0$ and $w_i(1) =1$ for $i=1,2.$ We have,
    \begin{equation}\label{eq:Fprime_0_1}
      F'(p) = w_1'(w_2(p))w_2'(p) \ \implies F'(0) = w_1'(0) w_2'(0), \ \ F'(1) = w_1'(1) w_2'(1).  
    \end{equation}
    If $2\leq m_1\leq k_1-1,$ then $w_1'(0) = w_1'(1) =0.$ Similarly, if $2\leq m_2\leq k_2-1,$ then $w_2'(0) = w_2'(1) =0.$ In either case, from Eq. \eqref{eq:Fprime_0_1}, it follows that $F'(0) = F'(1) = 0.$ The statement now follows from Theorem \ref{thm:two_times_composition}, Claims \ref{cla:atmost_one_fixed_point} and \ref{cla:atleast_one_fixed_point}.
\end{proof}
\section{Discussion}
\subsection{Uniqueness of Inflection Points is not Invariant to Compositions}
The following example illustrates that the composition of general functions with unique inflection points might admit multiple inflection points. This highlights that the property that the uniqueness of inflection points is preserved under composition is a special property of binomial exceedance functions.

Define functions $g_1,g_2:[0,1]\rightarrow [0,1]$ as follows:
\begin{align*}
    g_1(p) &= 4\left(p-\frac{1}{2}\right)^3 + \frac{1}{2}, \ \ g_2(p) = \frac{64}{28}\left(p-\frac{3}{4}\right)^3 + \frac{27}{28}.
\end{align*}
It is easily verified that $g_1$ has a unique inflection point at $p=\frac{1}{2}$ and $g_2$ has a unique inflection point at $p=\frac{3}{4}.$ Also, $g_i(0) =1, g_i(1) = 0$ for $i=1,2.$  The composition function $g_1(g_2(p)),$ however has more than one interior inflection point as can be seen in Figure \ref{fig:counter_example}.

\begin{figure}[h]
\begin{centering}
\caption{Example of Composition with Multiple Inflection Points}\label{fig:counter_example}
\includegraphics[scale=0.5]{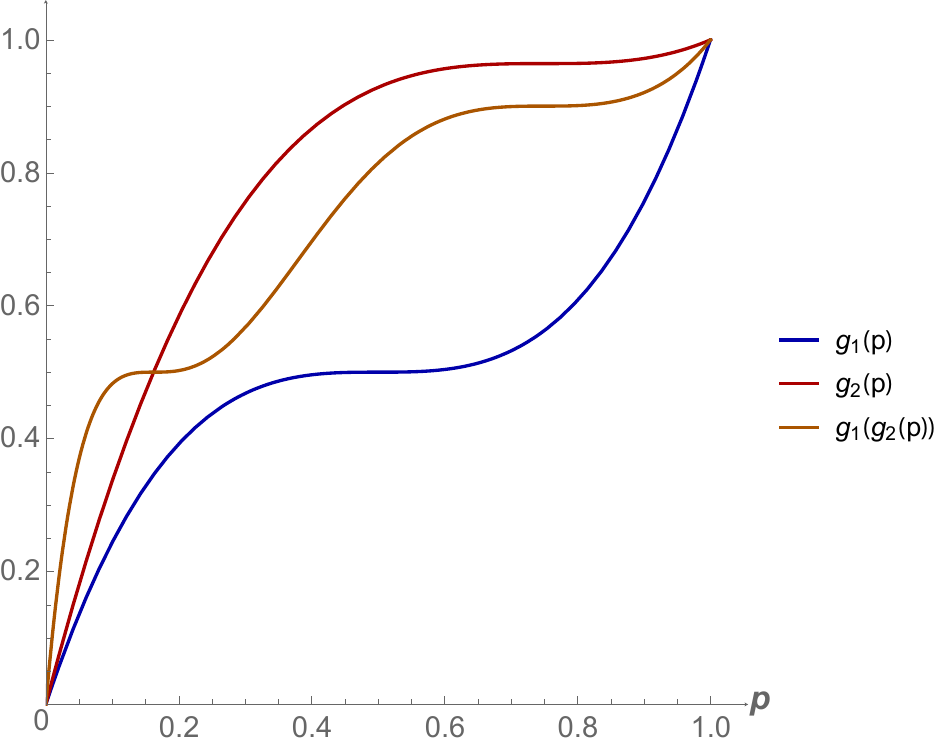}
\par\end{centering}
\end{figure}

\subsection{Conjecture}
We conjecture that the uniqueness of the inflection point of binomial exceedance functions is preserved also under composition of $n>2$ functions. 

Formally, let $k_1,k_2,\dots,k_n$ be positive integers and $m_1,m_2,\dots,m_n$ be integers such that $1\leq m_i\leq k_i.$ For $i=1,2,\dots,n,$ define $w_i:[0,1]\rightarrow [0,1]$ as follows:
\[
w_i(p) = Pr(X_{k_i}^p \geq m_i).
\]
Define the function $F:[0,1]\rightarrow [0,1]$ as follows:
\begin{equation}\label{eq:conjecture_func}
    F(p) = w_1(w_2(w_3(\dots w_n(p)\dots)))
\end{equation}
We conjecture that the function in Eq. \eqref{eq:conjecture_func} has at most one interior inflection point.

\mybibliography
\end{document}